\newtheorem{theorem}{Theorem}
\newtheorem{claim}[theorem]{Claim}
\newtheorem{corollary}[theorem]{Corollary}
\newtheorem{lemma}[theorem]{Lemma}
\newcommand{\Xomit}[1]{ }
\newenvironment{proof}[1][Proof]{\textbf{#1.} }{\ \rule{0.5em}{0.5em}}
\begin{document}

\title{The weighted $2$-metric dimension of trees \\ in the
non-landmarks model}

\author{Ron Adar\thanks{Department of Computer Science, University of Haifa, Haifa,
Israel. \texttt{radar03@csweb.haifa.ac.il.}} \and Leah
Epstein\thanks{Department of Mathematics, University of Haifa,
Haifa, Israel. \texttt{lea@math.haifa.ac.il.}}}

\date{}

\maketitle

\vspace{-0.056cm}

\begin{abstract}
Let $T=(V,E)$ be a tree graph with non-negative weights defined on
the vertices. A vertex $\tau$ is called a separating vertex for
$u$ and $v$ if the distances of $\tau$ to $u$ and $v$ are not
equal. A set of vertices $L\subseteq V$ is a feasible solution for
the non-landmarks model (NL), if for every pair of distinct
vertices, $u,v\in V\setminus L$, there are at least two vertices
of $L$ separating them. Such a feasible solution is called a
\textit{landmark set}. We analyze the structure of landmark sets for trees and design a linear time algorithm for
finding a minimum cost landmark set for a given tree graph.
\end{abstract}

\section{Introduction}
In this paper we study a generalization of the metric dimension problem, which is a problem of finding
a landmark set or a resolving set of a graph
\cite{Slat75,HM1976,Babai,Chvatal,MT,KRR1996,CEJO00,SSH02,ST04,HSV11,ELW,HN12}.
A landmark set is a subset of vertices $L\subseteq V$ of an
undirected graph $G=(V,E)$,  such that for any $u,v\in V$ ($u\neq
v$), there exists $\tau\in L$ with $d(u,\tau)\neq d(v,\tau)$,
where $d(x,y)$ denotes the number of edges in a shortest path
between $x$ and $y$. In this case, $\tau$ is called a separating
vertex for $u$ and $v$. Alternatively, it is equivalent to require
for $L$ to have a separating vertex for every pair $u,v\in
V\setminus L$.

This problem was introduced by Harary and Melter \cite{HM1976} and by Slater \cite{Slat75}. The problem was studied in the combinatorics literature \cite{Babai,CH+07,CEJO00,CZ03}, with respect to complexity \cite{KRR1996,BE+06,HSV11,HN12,DPL11,ELW}, and with respect to design of polynomial time algorithms for certain graph classes (sometimes even for the
weighted version), and in particular for paths and trees \cite{Slat75,HM1976,KRR1996,CEJO00,SSH02,ELW}.

The $k$-metric dimension problem (for an integer $k \geq 2$) \cite{kmetric1,AEfirst,kmetric2,kmetric3,kmetric4,kmetric5} is the problem of finding a
subset $L$ that has at least $k$ separating vertices for every
pair of distinct vertices $u$ and $v$.
If this is required for
every $u,v \in V$, the model is called AP (all-pairs model) (introduced in \cite{kmetric1} and independently in \cite{AEfirst}), and
if this is required for every $u,v \in V \setminus L$, the model is
called NL (non-landmarks model), which was introduced in \cite{AEfirst}.
In all cases, a valid solution is
called a landmark set. In the weighted case, a non-negative
rational cost is given for each vertex by a function
$c:V\rightarrow \mathbb{Q}^{+}$. For a set $U\subseteq V$,
$c(U)={\sum_{v\in U}}w(v)$ is the total cost of vertices in $U$,
and the goal is to find a landmark set $L$ with the minimum value
$w(L)$. For a given graph, the minimum cardinality of any landmark set is called the $k$-metric dimension, while the minimum cost of any set is called the weighted $k$-metric dimension. Yero, Estrada-Moreno, and  Rodr{\'{\i}}guez{-}Vel{\'{a}}zquez \cite{kmetric3} showed that computing the $k$ metric dimension of an arbitrary graph is NP-hard.

In this paper we study the case of trees for $k=2$, in the non-landmarks model (NL). In this model, a landmark set always exists for any $k$ as $V$ is a landmark set.
Let $T=(V,E)$ be a tree graph. Let $n=\left\vert V\right\vert $ be
the number of $T$'s vertices. The case of a path graph was completely analyzed in \cite{AEfirst} (for all values of $k$ and the two models, see also \cite{kmetric4} for the all-pairs model), and therefore we will assume that $T$ has at least one vertex of degree greater than $2$. It was shown in \cite{AEfirst} that a minimal cost landmark set for a path graph and $k=2$ consists of two or three vertices. Note that (by definition) every solution for AP is a solution for NL. Any subset of three vertices of a path form a landmark set (for each of the models), the two endpoints of the path also for a landmark set for both models. However, a solution consisting of the first two vertices of the path or the last two vertices of the path are landmark sets for NL but not for AP. For the case $k=1$, trees were studied in \cite{CEJO00,HM1976,KRR1996,Slat75,ELW}.

Given a tree, a vertex of degree at least $3$ is called a core vertex or a {\it core}. A vertex of degree $2$ is called a path vertex, and a vertex of degree $1$ is called a leaf.
As paths were completely studied in \cite{AEfirst}, we will consider trees that have at least one core.
For a core $v$, we often consider the subtrees creating by removing $v$ from the tree, and call them the subtrees of its neighbors (one subtree for each neighbor). We sometimes consider the BFS tree that is created by rooting the tree at $v$.
A subtree of a neighbor of a core $v$ that is a path (without any cores) is called a leg (or a standard leg) of $v$. If a leg consists of a single vertex (that is, $v$ is connected to a leaf), we called it a short leg, and otherwise it is called a long leg. For a leg $\ell$ of $v$ which consists of $j \geq 1$ vertices, the vertex of position $i$, also denoted by $\ell^i$ (for $1 \leq i \leq j$), is the vertex of the leg $\ell$ whose distance from $v$ is $i$.
Khuller et al. \cite{KRR1996} showed  that a landmark set for $k=1$ can be created by selecting all leaves except for the leaf of one leg of each core. In \cite{ELW}, it was shown that a minimum cost landmark set is created by selecting one vertex of each leg of a core except for one leg, such that the selected vertices have total minimal cost.

Given a tree, we define a {\it small core} to be a core vertex that satisfies all the following conditions; The core has a degree of exactly $3$, it has at least two legs, one of which is a short leg. The second leg of a small core may be either short or long, and the third subtree of a neighbor of $v$ can be a (short or long) leg or it can contain one or more cores. Other cores are called regular cores. If the third subtree of a neighbor of $v$ contains a core, the closest core to $v$, $x$, is connected to $v$ by a path consisting of path vertices. For a small core $v$, if the closest core, $x$, is a small core too, then since $x$ also has two legs (in addition to the subtree  of a neighbor of $x$ that contains $v$), the tree has exactly two (small) cores and no regular cores (while in any other case, $x$ is a regular core). We will have two special cases. One special case is where the tree contains exactly two small cores (and no regular cores). The other special case is where the tree has exactly one small core (and no regular cores, that is, the tree consists of a core with three legs, at least one of which is short). The two special cases will be considered separately after some properties will be discussed, while all cases where the tree has at least one regular core will be treated together.

Next, we define a {\it modified leg}. For a core vertex $v$, a subtree of a neighbor of $v$ that has exactly one core, and this core is a small core, is called a modified leg (in this case $v$ is the closest regular core to at least one small core). That is, a modified leg of $v$ consists of a path to a small core, a small core, and the standard legs of the small core (where the small core has one short leg and one leg that is short or long).
A g-leg of a core $v$ is a subtree of a neighbor of $v$ that is either a standard leg (short or long) or a modified leg (the terms short leg and long leg will refer only to standard legs). A position on a modified leg $\ell$ is defined exactly as it is defined for a standard leg, but if the small core (which is part of $\ell$) is in position $i\geq 1$, then there are two vertices with position $i+1$ on $\ell$, denoted by $\ell^a$ and $\ell^b$, where $\ell^b$ is the unique vertex of a short leg of the small core (if both its standard legs are short, the choice of which vertex of a short leg is $\ell^a$ and which one is $\ell^b$ is done arbitrarily).
As explained above, except for the special case treated later, if a small core $u$ that belongs to a modified leg of $v$, then $v$ is a regular core.

A regular core is called minor if one of the two following conditions holds. The first condition is that it has at most one g-leg (as its degree is at least $3$, there are at least two other subtrees of its neighbors that are neither standard legs nor modified legs, and each such subtree contains a regular core).
The second condition is that its degree is at least $4$, it has no modified legs, it has exactly two standard legs, one of which is short (the other leg is either short or long) and the other (at least two) subtrees of its neighbors are not g-legs and contain regular cores.  Thus, for a minor core $v$, no matter which condition out of the two is satisfied, there are always at least two subtrees of the neighbors of $v$ which are not g-legs, and it either has at most one g-leg (which is standard or modified), or it has two standard legs, at least one of which is short. If a regular core is not minor, then we call it a {\it main core}.

In the AP (all pairs model), two separations in a landmark set $L$ are required for any pair of distinct vertices and not only for the vertices of $V\setminus L$. In this case, it is sufficient to analyze cores (without splitting them into small cores and regular cores) and standard legs. The condition for a set $L \subseteq V$ to be a landmark set is defined on the sets of legs of cores. For each core, if there is a leg that does not have any vertex in $L$, any other leg must have at least two vertices in $L$. In particular, if a core has a single leg, it will not have any vertices in $L$ in a minimum cost landmark set (see \cite{kmetric1,kmetric2,kmetric3}).

\begin{figure} [h!]
\hspace{1.4in}
\includegraphics[angle=0,width=0.5\textwidth]{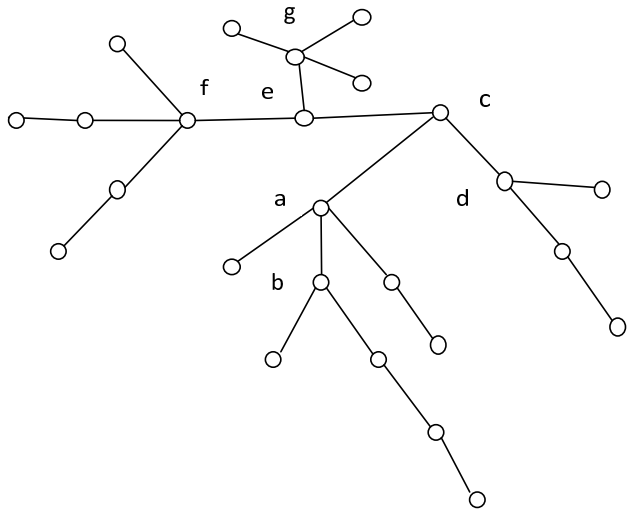}
\caption{An example of a tree with seven core vertices $a$,$b$,$c$,$d$,$e$,$f$,$g$. The vertices $b$ and $d$ are small cores, $c$ and $e$ are minor cores, all other are main cores. Note that $c$ is a core with a single (modified) leg, while $e$ is a core without any g-legs. \label{exmpl}}
\end{figure}

\section{Properties}
In this section we analyze the structure of landmark sets. We prove a number of lemmas and claims that determine the requirements of minimal landmark sets (with respect to cost or to set inclusion). We will define required solution types for g-legs. This will allow us to design a relatively simple algorithm for computing a minimum cost landmark set for a given tree in the next section.

\begin{lemma} \label{lemma1}
Consider a tree and a regular core $v$.
Every subtree of a neighbor of $v$ that has a regular core must have a main core. In
particular, a tree that has a minor core also has at least two
main cores, and every tree with a regular core has a main core.
\end{lemma}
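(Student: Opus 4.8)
The plan is to prove the first (main) assertion by contradiction and then to read off the two ``in particular'' statements as corollaries. I would fix a regular core $v$, root the tree at $v$, and suppose for contradiction that some subtree $S$ of a neighbor of $v$ contains a regular core but no main core; then every regular core lying in $S$ is minor. Since $S$ contains at least one regular core and only finitely many vertices, I would pick a regular core $w$ in $S$ of maximum distance from $v$ among the regular cores of $S$. Because $S$ is closed under taking descendants in the rooted tree, no proper descendant of $w$ is a regular core.

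I would then examine the components of $T-w$, one per neighbor of $w$. Exactly one of them, call it $U_0$, contains $v$. Since $v$ is a regular core, $U_0$ contains a core that is not small, so $U_0$ is neither a standard leg (which has no core) nor a modified leg (whose unique core is small); hence $U_0$ is not a g-leg of $w$. Now I use that $w$ is minor. If $w$ satisfies the first defining condition (at most one g-leg), then at least $\deg(w)-1\ge 2$ of the components of $T-w$ are non-g-legs and each of them contains a regular core; since $U_0$ is one such component, at least one of the remaining non-g-leg components is a downward subtree of $w$, and the regular core it contains is then a proper descendant of $w$, contradicting the choice of $w$. If $w$ satisfies the second condition (degree at least $4$, no modified leg, exactly two standard legs), then $w$ has exactly two g-legs and at least $\deg(w)-2\ge 2$ non-g-leg components, each containing a regular core; once more $U_0$ accounts for at most one of these, so a downward non-g-leg component with a regular core remains, giving the same contradiction. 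Hence $S$ must contain a main core.

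For the two remaining statements: given any minor core $w$, either defining condition provides at least two subtrees of neighbors of $w$ that are not g-legs and each contain a regular core; applying the main assertion to the regular core $w$ shows each of these subtrees contains a main core, and since these subtrees are vertex-disjoint we obtain at least two distinct main cores. Finally, if a tree has a regular core $v$, then either $v$ is a main core, or $v$ is minor and the previous sentence already exhibits a main core; in either case the tree has a main core.

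The one step I expect to be delicate is the observation that the ``upward'' component $U_0$ is never a g-leg of $w$, since this is precisely what forces one of the non-g-leg subtrees with a regular core (guaranteed by $w$ being minor) to be a descendant subtree --- the real engine of the contradiction. The rest is bookkeeping: counting g-legs against $\deg(w)$, and reducing the ``in particular'' claims to the main assertion.
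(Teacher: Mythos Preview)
Your proposal is correct and follows essentially the same approach as the paper: both arguments root the tree at $v$, pick a regular core of maximum depth in the subtree, and exploit that none of its descendant subtrees can contain a regular core. The only cosmetic difference is that the paper argues directly---the deepest regular core $x$ has all of its children's subtrees as g-legs, hence too many g-legs to satisfy either minor condition---whereas you phrase the same step as a contradiction, assuming $w$ is minor and using the non-g-leg count (together with the observation that the upward component $U_0$ is not a g-leg) to produce a deeper regular core. The two ``in particular'' clauses are handled identically in both proofs.
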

\begin{proof}
Root the tree at $v$,and consider a subtree of $v$ that has a regular core. Let $x$ be a regular core of a largest distance to $v$  in this subtree (that is, a regular core of largest depth in the rooted tree). The core $x$ must have at least two children in the rooted tree, as its degree in the tree is at least $3$.  If the degree of $x$ in the tree is at least $4$, $x$ has at least three children in the rooted three, and the subtrees of these children are g-legs, as these subtrees contain no regular cores (since otherwise $x$ is not a regular core of maximum depth). Since $x$ has at least three g-legs, it is not a minor core, and therefore it is a main core. Consider the case where the degree of $x$ is $3$ and it has exactly two children in the rooted tree, where the subtrees of these children are g-legs. A minor core of degree $3$ only has one g-leg, and therefore this case is impossible. We find that $x$ is a main core.

Consider a minor core $y$. By definition, the subtrees of at least two neighbors of $y$ are not g-legs, each of them has a regular core and thus it also has main core, proving that there are at least two main cores in the tree.
Finally, consider a tree that has a regular core $u$. If $u$ is a main core, we are done. Otherwise, it is a minor core, and in this case there are at least two main cores in the tree.
\end{proof}


We define algorithms for finding subsets of vertices for g-legs of cores. We will call these sets {\it local sets}. We will show that for trees with at least one regular core, a minimum weight landmark set can be found by combining the local sets (in the two special cases of trees without regular cores, the conditions will still be necessary but they will not always be sufficient). In the algorithms, we will assume that a core with a set of g-legs is given. For a regular core this is simply the set of its g-legs. The case of small cores is relevant only for the two special cases. If the tree has a single core which is small, we consider the three standard legs of this core. In the case of a tree with two small cores, each one of them can be seen as a core with two standard legs (one of which is small), and one modified leg.

Each solution for the set of g-legs of one core $v$ will consist of finding a solution for each g-leg and combining such solutions, but the g-legs cannot be dealt with independently (and $v$ will never be selected as a part of a local set). We now define several types of solutions for standard legs and for modified legs. For standard legs, the solution types are as follows. A type $(s,0)$ solution is an empty set (that is, no vertices of the leg are selected). A type $(s,1)$ solution consists of one vertex of the leg whose position on the leg is at least $2$. A type $(s,2)$ solution consists of at least two vertices of the leg. Notice that solutions of types $(s,1)$ and $(s,2)$ valid only for long legs. A type $(s,3)$ solution consists of the vertex with position $1$ in the leg.
For a modified leg $\ell$, whose small core is at position $i$, the solution types are as follows. A type $(m,1)$ solution contains exactly one vertex out of $\ell^a$ and $\ell^b$, that is, the solution is either $\{\ell^a\}$ or it is $\{\ell^b\}$. A type $(m,2)$ solution does not have any of the vertices $\ell^a$ and $\ell^b$, it consists of at least two vertices with positions $i+2$ or larger, and possibly other vertices (whose positions are not $i+1$). A type $(m,3)$ solution contains at least two vertices, one of which is either $\ell^a$ or $\ell^b$ (it is possible that it contains both these vertices).

A subset $S$ of vertices of the g-legs of a core $v$, inducing solutions for the g-legs, is called a local set if it satisfies the following conditions.
\begin{enumerate}
\item There is at most one standard leg whose solution is of type $(s,0)$, and the other standard legs have solutions of types $(s,1)$, $(s,2)$, and $(s,3)$.
\item All modified legs have solutions of types $(m,1)$, $(m,2)$, and $(m,3)$.
\item If there is a standard leg whose solution is of type $(s,0)$, then no modified leg has a solution of type $(m,1)$.
\item If there is a long leg $\ell$ whose solution is of type $(s,0)$, then every long leg except for $\ell$ has a solution of type $(s,2)$.
\item If there is a short leg whose solution is of type $(s,0)$, then every long leg has a solution of type $(s,2)$ or $(s,3)$.
\end{enumerate}
As mentioned above, a short leg cannot have a solution of type $(s,1)$ or $(s,2)$. Thus, if some standard leg $\ell$ has a solution of type $(s,0)$, then all short legs (expect for $\ell$, if it is short) have solutions of type $(s,3)$.

\begin{lemma}\label{mustbelocal}
Consider a landmark set $L$, and a core $v$. The set $L$ contains a local set $S$ of $v$ as a subset.
\end{lemma}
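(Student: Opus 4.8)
The plan is to fix a landmark set $L$ and a core $v$, let $S$ be the restriction of $L$ to the vertices lying on the g-legs of $v$ (deleting $v$ itself if it happens to be in $L$), and then verify that $S$ satisfies the five defining conditions of a local set. The main tool throughout is a simple observation about pairs of vertices that the separating condition for $L$ forces: if $u,w$ are two vertices of $V\setminus L$ lying in the same g-leg or in different g-legs of $v$, then the two separators required for the pair $\{u,w\}$ must (in most configurations) themselves lie on the g-legs of $v$, because any vertex outside all g-legs of $v$ is at the same distance from two given vertices of a fixed g-leg exactly when those two vertices are equidistant from $v$. So the analysis reduces to a book-keeping argument about distances to $v$.

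First I would handle Condition 1. Suppose for contradiction that two distinct standard legs $\ell,\ell'$ both induce a type $(s,0)$ solution, i.e.\ contain no vertex of $L$. Pick $\ell^1\in\ell$ and ${\ell'}^1\in\ell'$, the vertices of position $1$; these are in $V\setminus L$. Any separator for this pair must be a vertex $\tau$ with $d(\tau,\ell^1)\neq d(\tau,{\ell'}^1)$. A vertex $\tau$ not on $\ell\cup\ell'$ reaches both $\ell^1$ and ${\ell'}^1$ through $v$, so $d(\tau,\ell^1)=d(\tau,v)+1=d(\tau,{\ell'}^1)$ and $\tau$ does not separate them. A vertex on $\ell$ or on $\ell'$ is not in $L$ since those legs contribute nothing to $L$. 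Hence the pair has no separator in $L$, contradicting that $L$ is a landmark set. The same style of argument, applied to two well-chosen vertices, gives Condition 2 (a modified leg with a $(m,0)$-like solution, namely no vertices at positions $i+1$ and at most one vertex deeper, fails to separate an appropriate pair inside that leg or across to another leg) — here one uses the two vertices $\ell^a,\ell^b$, which are equidistant from $v$, so a $(m,1)$ solution selecting exactly one of them still separates that pair, whereas selecting neither and fewer than two deeper vertices does not. Conditions 3, 4, and 5 are the genuinely combinatorial cases: they say that a type $(s,0)$ leg forces \emph{additional} separation in the remaining g-legs. For Condition 4, if a long leg $\ell$ is $(s,0)$ and some other long leg $\ell'$ has a solution that is not $(s,2)$ (so at most one vertex of $L$ on $\ell'$), take $\ell^1$ and the vertex of $\ell'$ at position $1$; every separator must lie on $\ell\cup\ell'$, none lies on $\ell$, so both must lie on $\ell'$, forcing $|L\cap\ell'|\geq 2$, i.e.\ type $(s,2)$ — a contradiction. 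Conditions 3 and 5 are variants of this with the appropriate pair: for Condition 5 one compares the position-$1$ vertex of the $(s,0)$ short leg with a low-position vertex of a long leg to see that a $(s,1)$ solution on the long leg leaves the pair unseparated.

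The step I expect to be the main obstacle is making the distance bookkeeping fully rigorous in the modified-leg cases (Conditions 2 and 3), because a modified leg contains its own small core with two further legs, so "distance from $v$" is no longer injective on the leg and one must argue carefully which pairs of vertices inside the modified leg still need their separators to come from the g-legs of $v$ rather than from inside the small core's own legs — this is exactly why the types $(m,1)$, $(m,2)$, $(m,3)$ are defined in terms of $\ell^a,\ell^b$ and positions $\geq i+2$ rather than naively. A clean way to organize this is to first prove a single auxiliary claim: for any two vertices $u,w$ in $V\setminus L$ with $d(u,v)=d(w,v)$ and lying in distinct g-legs of $v$, every separator of $\{u,w\}$ lies on the g-legs of $v$; and for $u,w$ in the same g-leg with a common ancestor that is the root $v$ (i.e.\ on different "branches" created after leaving $v$), the same holds. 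With that claim in hand, each of the five conditions follows by choosing the right pair and counting, and then $S=L\cap(\text{g-legs of }v)$ is by definition a local set, which completes the proof.
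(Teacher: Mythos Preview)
Your overall strategy is the same as the paper's: restrict $L$ to the g-legs of $v$ and verify the five conditions by exhibiting, for each putative violation, a pair of non-landmark vertices that cannot be twice-separated. Conditions 1, 2, and 5 are handled correctly, and your auxiliary claim (that separators of two vertices equidistant from $v$ must lie on the g-legs of $v$) is exactly the tool the paper uses implicitly.

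There is, however, a genuine gap in your argument for Condition~4. You assume a long leg $\ell$ is of type $(s,0)$ and another long leg $\ell'$ is not $(s,2)$, and you choose the witness pair $\{\ell^1,{\ell'}^1\}$. This works if $\ell'$ is of type $(s,1)$, but it fails if $\ell'$ is of type $(s,3)$: in that case ${\ell'}^1\in L$, so the pair $\{\ell^1,{\ell'}^1\}$ is \emph{not} required to have two separators, and your contradiction evaporates. You therefore have not excluded $(s,3)$ for $\ell'$, which is precisely what Condition~4 demands when $\ell$ is long. The paper fixes this by moving one position deeper: once it has argued that a single landmark on $\ell'$ must sit at position~$1$ (ruling out $(s,1)$), it then uses the pair $\{\ell^2,{\ell'}^2\}$, both of which are outside $L$ (since $\ell$ is empty and $\ell'$ has only ${\ell'}^1$ in $L$), and observes that this pair has at most one separator. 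Since both legs are long, positions~$2$ exist, and this rules out $(s,3)$ for $\ell'$. Replace your witness pair by $\{\ell^2,{\ell'}^2\}$ (or do a case split on whether ${\ell'}^1\in L$) and the argument goes through.
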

\begin{proof}
First, we show that if the vertices of a g-leg that are in $L$ do not form any of the types of solutions defined above, then $L$ is not a landmark set. For standard legs, these types cover all possible solutions, thus we consider modified legs.
Consider a modified leg $\ell$ of a regular core $v$, let $u$ be its small core, and let $i$ be the position of $u$ on $\ell$.  Any solution that contains at least one of $\ell^a$ and $\ell^b$ is either of type $(m,1)$ or of type $(m,3)$. If the solution does not contain any of the vertices $\ell^a$ and $\ell^b$, these two vertices cannot be separated by any vertex that is not on the legs of $u$, as their paths to such vertices traverse $u$ (or end at $u$), and their distances to $u$ are equal. To obtain two separations between $\ell^a$ and $\ell^b$ (if none of $\ell^a$ and $\ell^b$ is in $L$), $L$ must contain at least two vertices whose positions on $\ell$ are at least $i+2$ (these vertices are on the same standard leg of $u$ as $\ell^a$). Thus, in this case the solution is of type $(m,2)$. This also proves the second condition of local sets.

Consider two standard legs of $v$, $\ell_1$ and $\ell_2$. If none of the vertices $\ell_1^1$ and $\ell_2^1$ is in $L$, then they can only be separated by vertices of $\ell_1 \cup \ell_2$, as they have equal distances to $v$, and all paths from $\ell_1^1$ and $\ell_2^1$ to vertices not in $\ell_1 \cup \ell_2$ traverse $v$ (or end at $v$). Thus, the case that both legs have type $(s,0)$ solutions is impossible. This proves the first condition of local sets (and that expect for at most one short leg with an $(s,0)$ type solution, the solutions of every short leg are of type $(s,3)$).
Consider a standard leg without any landmarks, $\ell_1$, and a leg $\ell_2$ that is either long or modified. We will show that $\ell_2$ has at least two landmarks unless $\ell_1$ is short and $\ell_2$ is long, in which case it may have a type $(s,3)$ solution, and in all other cases the possible types of solutions are those having at least two vertices, that is, $(s,2)$, $(m,2)$, and $(m,3)$ type solutions. Assume that $\ell_2$ has exactly one landmark.
The landmark of $\ell_2$ must be in the first position of $\ell_2$, as otherwise none of $\ell_1^1$ and $\ell_2^1$ is in $L$,
and there is just one separation between them (because only vertices of $\ell_1 \cup \ell_2$ can separate them). If $\ell_2$ is a modified leg, then as the positions of $\ell_2^a$ and $\ell_2^b$ on $\ell_2$ are at least $2$, this is not a solution of type $(m,1)$, and since it has a single vertex, it is not a solution of type $(m,2)$ or $(m,3)$, and we reach a contradiction. We are left with the case that  $\ell_2$ is long, and its solution is of type $(s,3)$. However, if $\ell_1$ is long as well, since $\ell_1^2,\ell_2^2 \notin L$ (and $\ell_1$ has no vertices in $L$ while $\ell_2$ only has $\ell_2^1$ in $L$), these two vertices will only have one separation, showing that the solution is not valid. This proves the last three conditions.
%
%
%
%
%
%
%
%
%
%
%
\end{proof}

\begin{lemma} \label{lemma3}
Consider a set $L \subseteq V$. If for every core $v$, the subset of $L$ that is restricted to the vertices of the g-legs of $v$ is a local set, then every pair of vertices $x,y \notin L$ with equal positions on the g-legs of $v$ has at least two separations in $L$.
\end{lemma}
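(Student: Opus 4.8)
The plan is to fix a core $v$ and two distinct vertices $x,y\notin L$ that occupy a common position $p$ on the g-legs of $v$, and to produce two vertices of $L$ separating them. The first step is the observation that $d(x,v)=d(y,v)=p$, so that no vertex outside the g-leg(s) containing $x$ and $y$ can separate them: if $w$ lies neither on the g-leg of $x$ nor on the g-leg of $y$ (in particular if $w=v$), then both the $x$--$w$ path and the $y$--$w$ path pass through $v$, whence $d(x,w)=p+d(v,w)=d(y,w)$. Since a standard leg carries a unique vertex at each position, there are exactly two cases: (i) $x$ lies on a g-leg $\ell_1$ and $y$ on a distinct g-leg $\ell_2$; or (ii) $x$ and $y$ lie on a common modified leg $\ell$ whose small core has position $i$, with $\{x,y\}=\{\ell^a,\ell^b\}$.

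For case (i) I would first check that \emph{every} vertex of $\ell_1\cup\ell_2$ separates $x$ and $y$: for $w\in\ell_1$ the unique $x$--$w$ path stays inside the subtree $\ell_1$ and so misses $v$, giving $d(x,w)<d(x,v)+d(v,w)=p+d(v,w)$, while $d(y,w)=p+d(v,w)$ since the $y$--$w$ path runs through $v$; hence $d(x,w)<d(y,w)$, and symmetrically for $w\in\ell_2$. Thus the pair receives exactly $|L\cap(\ell_1\cup\ell_2)|$ separations in $L$, and it suffices to deduce $|L\cap(\ell_1\cup\ell_2)|\ge 2$ from the local-set conditions. A modified leg always carries a nonempty solution (type $(m,1)$, $(m,2)$, or $(m,3)$ by condition~2), and a standard leg of type $(s,1)$, $(s,2)$, or $(s,3)$ contributes at least one vertex of $L$; so if neither $\ell_1$ nor $\ell_2$ is a standard leg of type $(s,0)$ we are done. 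Otherwise exactly one of them, say $\ell_1$, is standard of type $(s,0)$ (at most one such leg, by condition~1), and $L\cap\ell_1=\emptyset$, so I must show $|L\cap\ell_2|\ge 2$. If $\ell_2$ is modified then condition~3 forbids type $(m,1)$, leaving $(m,2)$ or $(m,3)$, each with at least two vertices of $L$. If $\ell_2$ is a long leg, conditions~4 and~5 force type $(s,2)$ (two vertices of $L$) unless $\ell_1$ is short and $\ell_2$ is of type $(s,3)$. If $\ell_2$ is a short leg then it is of type $(s,3)$ (it is not $(s,0)$, by condition~1, and short legs admit only types $(s,0)$ and $(s,3)$). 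In the two leftover situations $\ell_2$'s only landmark is $\ell_2^1$; but in both of them $p=1$ — because $\ell_2$ is short, or because $\ell_1$ is short and hence $x=\ell_1^1$ — so $y=\ell_2^1\in L$, contradicting $y\notin L$; these sub-cases are therefore vacuous, and in all others $\ell_2$ contains two landmarks.

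For case (ii) I would reuse the argument from Lemma~\ref{mustbelocal}: since $d(\ell^a,u)=d(\ell^b,u)=1$ and every path from $\ell^a$ or $\ell^b$ to a vertex off the standard legs of $u$ traverses or ends at $u$, the only vertices capable of separating $\ell^a$ and $\ell^b$ lie on the standard legs of $u$; among those, $u$ is equidistant from the two and $\ell^a,\ell^b\notin L$, so any landmark separating the pair sits on the standard leg containing $\ell^a$ at a position $\ge i+2$. By condition~2 the solution on $\ell$ is of type $(m,1)$, $(m,2)$, or $(m,3)$, and since $\ell^a,\ell^b\notin L$ it is neither $(m,1)$ nor $(m,3)$; hence it is of type $(m,2)$, which by definition contains at least two vertices of $L$ at positions $\ge i+2$, and these are two separating landmarks.

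I expect the principal obstacle to be the bookkeeping in case (i): one must apply conditions~1--5 in the right combinations and, above all, notice that the only configurations that would leave $\ell_1\cup\ell_2$ with a single landmark — a short leg of type $(s,0)$ paired either with a short leg or with a long leg of type $(s,3)$ — are precisely those in which $p=1$ and $y$ is pinned to a position-$1$ landmark, so that the hypothesis $y\notin L$ fails and the case is vacuous. The modified-leg subtleties (the two vertices at position $i+1$ versus the landmarks at positions $\ge i+2$) are inherited directly from Lemma~\ref{mustbelocal}.
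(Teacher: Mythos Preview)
Your proof is correct and follows essentially the same approach as the paper's: split into the same-g-leg case (forcing $\{x,y\}=\{\ell^a,\ell^b\}$ and a type $(m,2)$ solution) and the different-g-leg case, show that every vertex on the two relevant g-legs separates $x$ and $y$, and then use the local-set conditions to guarantee at least two landmarks there. Your handling of the residual sub-cases (a short leg forcing $p=1$ and hence $y=\ell_2^1\in L$) matches the paper's reasoning exactly, just organized slightly differently.
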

\begin{proof}
If $x$ and $y$ are on the same g-leg of $v$, then this must be a modified leg $\ell$ of $v$, and $\{x,y\}=\{\ell^a,\ell^b\}$. Since $x,y \notin L$, the solution for $\ell$ must be of type $(m,2)$. In this case there are two vertices of $L$ on the leg that contains $\ell^a$, the distance of each such vertex to $\ell^a$ is smaller than its distance to $\ell^b$, and there are at least two separations between $x$ and $y$.

Consider the case where $x$ and $y$ are on different g-legs of $v$. We claim that any vertex $z$ on the leg of $x$ is closer to $x$ than it is to $y$. Let $\tilde{\ell}$ be the leg of $x$. The path from $y$ to $z$ traverses $\tilde{\ell}^1$, thus $d(y,z)=d(y,v)+1+d(\tilde{\ell}^1,z)=d(x,v)+1+d(\tilde{\ell}^1,z)=d(x,\tilde{\ell}^1)+2+d(\tilde{\ell}^1,z)$. However, $d(x,z) \leq d(x,\tilde{\ell}^1)+d(\tilde{\ell}^1,z)$, proving $d(x,z)<d(y,z)$. Thus, if the two g-legs (of $x$ and $y$) have at least two vertices of $L$ in total, then there are at least two separations between $x$ and $y$. Any modified leg has at least one vertex in any local set, and if there is a standard leg with a $(s,0)$ type solution, then any modified leg has at least two vertices in any local set. We find that the only case where the two g-legs have at most one vertex of $L$ (together) is where both these g-legs are standard legs, one leg have a type $(s,0)$ solution, and the other leg has a type $(s,3)$ solution. We show such a solution is not possible. If at least one of the legs of $x$ and $y$ is short, then it only has a vertex in position $1$. Thus, the positions of $x$ and $y$ are equal to $1$. This last case is impossible as $x,y \notin L$ implies that none of these two legs has a type $(s,3)$ solution if the positions of $x$ and $y$ are $1$. We find that both these legs are long, but then the leg that does not have a type $(s,0)$ solution must have a type $(s,2)$ solution, and the two legs have at least two vertices of $L$ in total, so this case is impossible too.
\end{proof}

\begin{lemma}\label{twotwo}
A local set of a main core has at least two vertices.
\end{lemma}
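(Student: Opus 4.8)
The plan is to fix a main core $v$ with a local set $S$ and show $|S|\ge 2$. Since $v$ is not a minor core, in particular the first minor-core condition fails, so $v$ has at least two g-legs. I would first dispose of the easy case: if $v$ has three or more g-legs, then by the first condition in the definition of a local set at most one standard leg contributes an empty, type $(s,0)$, solution, while by the second condition every modified leg contributes at least one vertex; hence at least two g-legs contribute a vertex to $S$, and $|S|\ge 2$. So it remains to handle a main core $v$ with \emph{exactly two} g-legs $\ell_1,\ell_2$, and there I would assume for contradiction that $|S|\le 1$.

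In that situation, among $\ell_1,\ell_2$ at most one can be empty (two empty standard legs are forbidden by the first local-set condition, and a modified leg is never empty by the second), and if both were nonempty we would already have $|S|\ge 2$; so exactly one of them, say $\ell_1$, is a standard leg with an $(s,0)$ solution, and $\ell_2$ contributes exactly one vertex. Now I would peel off the structure of $\ell_2$: by the third local-set condition, since $\ell_1$ is an $(s,0)$ standard leg, no modified leg has an $(m,1)$ solution, and as $(m,2)$ and $(m,3)$ solutions each have at least two vertices, $\ell_2$ cannot be a modified leg; thus $\ell_2$ is a standard leg with an $(s,1)$ or $(s,3)$ solution. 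Next, using the fourth condition, if $\ell_1$ were long then any other long leg would have to have an $(s,2)$ solution with two vertices, forcing $\ell_2$ to be short; and if $\ell_1$ is short there is nothing more to say. Hence $v$ has exactly two standard legs, at least one of which is short, and no modified legs.

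Finally I would derive a contradiction with $v$ being a main core from the structure of the tree, splitting on $\deg(v)$. If $\deg(v)=3$, then $v$ has exactly three subtrees of neighbors, two of which are standard legs with at least one short — which is exactly the definition of a small core — contradicting that $v$ is a regular core. If $\deg(v)\ge 4$, then $v$ has degree at least $4$, no modified legs, exactly two standard legs one of which is short, and its remaining (at least two) subtrees of neighbors are not g-legs; I would then check that each such subtree contains a regular core: not being a standard leg it contains a core, and, rooting the tree at $v$, its shallowest core $z$ cannot be a small core, since otherwise the side of $z$ leading back toward $v$ is one of $z$'s three neighbor-subtrees and is not a leg, forcing $z$'s other two neighbor-subtrees to be its two legs, so that this subtree would be exactly a path to $z$ together with $z$'s legs, i.e.\ a modified leg. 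Hence $z$ is a regular core, and all hypotheses of the second minor-core condition hold, so $v$ is minor — a contradiction.

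The main obstacle is this last structural step: one must unwind the definitions of small core, modified leg, and the two minor-core conditions and verify that ``two standard legs, one short, together with non-g-leg sibling subtrees'' is precisely what makes a regular core either small (degree $3$) or minor (degree $\ge 4$); in particular one needs the small sub-argument above that a sibling subtree which is not a g-leg must contain a regular core. Everything else is a short deduction from the five local-set conditions.
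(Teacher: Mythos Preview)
Your proof is correct and follows essentially the same route as the paper's: assume the local set has at most one vertex, deduce that $v$ has exactly two g-legs with one empty, rule out the nonempty one being modified via condition~3, and then use conditions~4/5 together with the minor/small-core definitions to reach a contradiction. The paper organizes the last step slightly differently (it splits on ``one leg short'' $\Rightarrow$ $v$ minor versus ``both long'' $\Rightarrow$ an $(s,2)$ solution already gives two vertices) and does not explicitly separate $\deg(v)=3$ from $\deg(v)\ge 4$ or re-verify that non-g-leg subtrees contain a regular core, since those facts are absorbed into the definitions earlier in the paper; your extra unwinding is correct but more detail than is strictly needed.
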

\begin{proof}
Consider a main core $v$ where the local set of its g-legs has at most one vertex.
Any local set of $v$ contains at least one vertex on every g-leg except for possibly one g-leg, and it has at least one vertex of every modified leg. Thus, $v$ has at most two g-legs.
If $v$ has at most one g-leg, then it is a minor core. If it has exactly two g-legs, then one of them has no vertices in the local set, while the other one has one vertex in this set. The g-leg with one vertex in the local set cannot be a modified leg, as in this case (where there is a standard leg with a $(s,0)$ type solution), it would have at least two vertices in the local set, by property $3$ of local sets. Thus, $v$ has two standard legs. If one them is short, then $v$ is a minor core again. Otherwise, as a long leg has a type $(s,0)$ solution, the other long leg has a type $(s,2)$ solution, and the local set has at least two vertices, contradicting the assumption.
\end{proof}

\begin{claim}\label{clm1}
Consider a tree, a core $v$, and a set $L \subseteq V$ that contains a local set for $v$. If $x \neq y$ are vertices of the subtree consisting of $v$ and its g-legs such that $x,y \notin L$ and $d(x,v)<d(y,v)$, then any vertex of $L$ expect, possibly, for the vertices on the g-leg of $v$ that contains $y$ separates them.
\end{claim}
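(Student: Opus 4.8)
The plan is to fix an arbitrary vertex $z\in L$ that does not lie on the g-leg of $v$ containing $y$, and to prove the stronger statement $d(x,z)<d(y,z)$; this immediately yields that $z$ separates $x$ and $y$. Write $\ell_y$ for the g-leg of $v$ on which $y$ lies (it exists since $d(y,v)>0$). I expect the hypothesis that $L$ contains a local set of $v$ to play no role here: the statement is a pure metric fact about trees, and only the structure of the tree together with $d(x,v)<d(y,v)$ is used.

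The first step is the identity $d(y,z)=d(y,v)+d(v,z)$. Since $y$ lies in the subtree $\ell_y$ and $z$ does not, the unique $y$--$z$ path in the tree must leave $\ell_y$, which it can only do through the edge $\{\ell_y^1,v\}$; hence it decomposes as a $y$--$v$ path followed by a $v$--$z$ path, and additivity of distances along this path gives the identity. (In particular $d(y,v)$ equals the position of $y$ on $\ell_y$, where for a modified leg the extra vertex at position $i+1$ is counted as being at distance $i+1$ from $v$.)

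The second step is to bound $d(x,z)$ from above by something strictly smaller than $d(y,v)+d(v,z)$, splitting into two cases. If $x=v$, or if $x$ lies on a g-leg $\ell_x$ that does not contain $z$, then the $x$--$z$ path also passes through $v$, so $d(x,z)=d(x,v)+d(v,z)$ (with $d(v,v)=0$), and since $d(x,v)<d(y,v)$ this is strictly less than $d(y,z)$. If instead $x\neq v$ and $z$ lies on the same g-leg $\ell_x$ as $x$ (necessarily $\ell_x\neq\ell_y$), then the $x$--$z$ path stays inside $\ell_x$; applying the triangle inequality through $\ell_x^1$ and using $d(x,v)=d(x,\ell_x^1)+1$ and $d(v,z)=d(\ell_x^1,z)+1$ gives $d(x,z)\le d(x,\ell_x^1)+d(\ell_x^1,z)=d(x,v)+d(v,z)-2$, which is again strictly less than $d(y,v)+d(v,z)=d(y,z)$. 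In both cases $d(x,z)<d(y,z)$, so $z$ separates $x$ and $y$; as $z$ was an arbitrary element of $L$ off $\ell_y$, this proves the claim. Note that when $x$ and $y$ lie on a common g-leg this is the leg $\ell_y$, so $z\notin\ell_x$ and the first case applies.

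The argument is essentially routine bookkeeping; the only point requiring a little care is the subcase $z\in\ell_x$, where the ``through-$v$'' equality fails and one must instead exploit the two-unit slack in the triangle inequality to keep the inequality strict. One should also verify that the distance-equals-position bookkeeping behaves correctly when $\ell_x$ or $\ell_y$ is a modified leg (because of the branching at the small core), but this does not affect any of the displayed equalities or inequalities.
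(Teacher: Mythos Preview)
Your proof is correct and follows essentially the same approach as the paper's: both establish $d(y,z)=d(y,v)+d(v,z)$ and then bound $d(x,z)$ via the triangle inequality, splitting on whether $z$ lies on the g-leg of $x$. One small remark: the two-unit slack you highlight in the case $z\in\ell_x$ is not actually needed for strictness, since $d(x,z)\le d(x,v)+d(v,z)$ together with the hypothesis $d(x,v)<d(y,v)$ already gives $d(x,z)<d(y,z)$; the paper uses exactly this simpler bound. Your observation that the local-set hypothesis is unused is also correct.
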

\begin{proof}
Consider a vertex $z \in L$ that is not on any g-leg of $v$ that contains at least one of $x$ and $y$ (note that $y\neq v$, as $d(y,v)>0$, so $y$ is on a g-leg of $v$). The vertex $z$ separates $x$ and $y$ as
their paths to $z$ traverse $v$ (a path can start at $v$ if $x=v$, or the paths can end at $v$ if $z=v$), while $x$ and $y$ have different distances to $v$. If $x$ is on a g-leg that is not the g-leg of $y$, consider a vertex $z'$ on this g-leg. Since $d(x,z') \leq d(x,v)+d(v,z')<d(y,v)+d(v,z')$ while $d(y,z')=d(y,v)+d(v,z')$, $z'$ also separates $x$ and $y$.
\end{proof}

\begin{lemma}\label{subt}
Consider a tree that has at least two regular cores, and a set $L \subseteq V$. If for every regular core $v$, the subset of $L$ that is restricted to the vertices of the g-legs of $v$ is a local set, then for every pair of vertices $x,y \notin L$ in the subtree consisting of $v$ and its g-legs, there are two separations between $x$ and $y$.
\end{lemma}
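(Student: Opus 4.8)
The plan is to split on whether $x$ and $y$ are at the same distance from $v$. Write $T_v$ for the subtree consisting of $v$ and its g-legs. The starting observation is that every path between a vertex of $T_v$ and a vertex outside $T_v$ passes through $v$, since the g-legs of $v$ are subtrees attached to $v$. Hence, if $z\notin T_v$ and $x\in T_v$, then $d(z,x)=d(z,v)+d(v,x)$; consequently, for any $x,y\in T_v$ with $d(v,x)\neq d(v,y)$, every landmark $z\notin T_v$ has $d(z,x)\neq d(z,y)$ and separates $x$ and $y$. So it suffices to (i) exhibit at least two landmarks outside $T_v$, and (ii) handle the case $d(v,x)=d(v,y)$.

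For (i) I would invoke the hypothesis of at least two regular cores. By Lemma~\ref{lemma1} the tree then has at least two main cores: if some core is minor, Lemma~\ref{lemma1} already yields two main cores, and otherwise every regular core is main. Choose a main core $v'\neq v$. Because a g-leg of any core contains no regular core except, possibly, a small core of a modified leg, and a main core is not a small core, no main core lies inside a g-leg of another core; a short case check then gives $T_{v'}\cap T_v=\emptyset$. By hypothesis the restriction of $L$ to the g-legs of $v'$ is a local set of $v'$, so by Lemma~\ref{twotwo} it contains at least two vertices, which lie in $T_{v'}$ and hence outside $T_v$. Thus $|L\setminus T_v|\geq 2$.

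For (ii), take $x\neq y$ with $x,y\notin L$ and both in $T_v$, and suppose $d(v,x)=d(v,y)$. Then neither $x$ nor $y$ is $v$ (the only vertex of $T_v$ at distance $0$ from $v$), so both lie on g-legs of $v$, and they have equal positions there, since the position of a vertex on a g-leg of $v$ equals its distance to $v$. This is precisely the setting of Lemma~\ref{lemma3}, whose proof uses only that the restriction of $L$ to the g-legs of this single core $v$ is a local set; it yields two separations between $x$ and $y$. In particular this also disposes of the subcase where $x$ and $y$ lie on a common g-leg, which forces that g-leg to be modified with $\{x,y\}=\{\ell^a,\ell^b\}$.

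The only part that genuinely relies on the assumption of at least two regular cores is (i): producing a main core $v'\neq v$ whose g-legs, and hence whose two or more landmarks, are disjoint from $T_v$. I expect the one delicate point to be the verification that $T_{v'}\cap T_v=\emptyset$; the remainder is the distance bookkeeping of the first paragraph together with direct appeals to Lemmas~\ref{twotwo} and~\ref{lemma3}.
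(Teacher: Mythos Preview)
Your proposal is correct and follows essentially the same approach as the paper: split on whether $d(x,v)=d(y,v)$, invoke Lemma~\ref{lemma3} in the equal case, and in the unequal case use Lemma~\ref{lemma1} together with the hypothesis to locate a second main core whose local set (of size $\geq 2$ by Lemma~\ref{twotwo}) supplies the two separating landmarks. The only cosmetic difference is that the paper packages your ``paths through $v$'' distance computation into Claim~\ref{clm1} and cites it, whereas you redo that bookkeeping inline; conversely, you make the disjointness $T_{v'}\cap T_v=\emptyset$ explicit while the paper leaves it implicit in its appeal to Claim~\ref{clm1}.
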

\begin{proof}
By Lemma \ref{lemma1}, a tree with a minor core also has at least two main
cores. Thus, the tree has at least two main cores in any case. If
$d(x,v)=d(y,v)$, then there are two separations between $x$ and
$y$ in the subset of $L$ restricted to the g-legs of $v$, as it is
a local set. Otherwise, the tree has at least one additional main
core $u$ except for $v$, such that the vertices of $L$ on the
g-legs of $u$ form a local set for $u$, and therefore there are at
least two such vertices, by Lemma \ref{twotwo}. By Claim
\ref{clm1}, $x$ and $y$ are separated by the vertices of $L$ that
are in the local set which is the subset of $L$ restricted to the
g-legs of $u$.
\end{proof}

\begin{lemma}\label{toocor}
Consider a tree with at least two regular cores, and a set $L
\subseteq V$. If for every regular core $v$, the subset of $L$
that is restricted to the vertices of the g-legs of $v$ is a local
set, then $L$ is a landmark set.
\end{lemma}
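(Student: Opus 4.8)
The plan is to fix distinct vertices $x,y\in V\setminus L$ and show they have at least two separations in $L$. The useful reformulation is that a vertex $z$ separates $x$ and $y$ exactly when $z\notin N$, where $N=\{w\in V: d(w,x)=d(w,y)\}$; in a tree, $N=\emptyset$ when $d(x,y)$ is odd, while when $d(x,y)$ is even, $N$ is the connected component containing $c$ in $T-\{c^-,c^+\}$, where $c$ is the midpoint of the path $P$ from $x$ to $y$ and $c^-,c^+$ are its neighbors on $P$; in the latter case $N$ is a subtree containing neither $x$ nor $y$. So it suffices to produce two vertices of $L$ outside $N$. Two preliminary observations: (i) if $x$ and $y$ both lie in the subtree $T_v$ induced by a regular core $v$ and its g-legs, then Lemma~\ref{subt} already gives the conclusion, so we may assume this does not happen; (ii) since the tree has at least two regular cores, Lemma~\ref{lemma1} gives at least two main cores, and as g-legs of distinct cores are vertex-disjoint, their local sets are disjoint and each has size at least $2$ by Lemma~\ref{twotwo}, so $|L|\ge 4$.

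I would first dispose of the easy case $|N|\le 1$: this occurs precisely when $d(x,y)$ is odd (so $N=\emptyset$) or $c$ has degree at most $2$ (and $c$ cannot be a leaf, being interior to $P$, so $N=\{c\}$); then $|L\setminus N|\ge|L|-1\ge 3$, and we are done. In the remaining case $|N|\ge 2$, which forces $c$ to be a core vertex, the whole task reduces to finding a main core $m$ all of whose g-legs avoid $N$: then $S_m$ supplies the two required vertices.

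To find such an $m$: if $c$ is a regular core, then since $x$ and $y$ do not both lie in $T_c$, at least one of the subtrees of $T-c$ containing $c^-$ or $c^+$, say the one containing $c^+$, is not a g-leg of $c$ and hence contains a regular core, so by Lemma~\ref{lemma1} it contains a main core $m$; every g-leg of $m$ then lies inside that subtree, because the component of $T-m$ that contains $c$ contains the regular core $c$ and so is not a g-leg, while that subtree is disjoint from $N$. If $c$ is a small core, it lies inside a modified leg $\ell$ of a regular core $v$; since $x$ and $y$ are not both on $\ell\subseteq T_v$, one of $c^-,c^+$ points toward $v$, and the corresponding endpoint, say $y$, can lie neither on $\ell$ strictly between $c$ and $v$ nor at $v$ itself (either would place both $x$ and $y$ in $T_v$), so $y$ lies in a subtree $S$ hanging off a neighbor of $v$ not on $\ell$; this $S$ is not a g-leg of $v$ (otherwise $y\in T_v$), so by Lemma~\ref{lemma1} it contains a main core $m$, and, just as before, all g-legs of $m$ lie inside $S$, which is disjoint from $N$ (here $N\subseteq\ell$).

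The main obstacle I expect is this case analysis when $c$ is a core: one must keep careful track of which subtrees hanging off $c$ (or off $v$, when $c$ is a small core) are g-legs and which contain regular cores, check that $N$ is confined to the ``near'' side, and verify that a main core produced by Lemma~\ref{lemma1} on the ``far'' side has all of its g-legs on that far side --- the last point relying on the fact that g-legs contain no regular cores, so any path from such a main core back toward $c$ must leave through a non-g-leg subtree. The odd-distance case and the case of a small-degree midpoint are routine once $|L|\ge 4$ has been established.
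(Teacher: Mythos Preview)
Your argument is correct and takes a genuinely different route from the paper's proof. The paper proceeds by a location-based case analysis on $x$: it distinguishes (a) both $x,y$ lie in the subtree of some regular core and its g-legs (handled by Lemma~\ref{subt}), (b) $x$ lies in such a subtree for a main core $v$ but $y$ does not, (c) $x$ lies in such a subtree for a minor core, and (d) neither vertex lies in any such subtree; in each of (b)--(d) it exhibits two main cores on appropriate sides and compares $d(x,\cdot)$ and $d(y,\cdot)$ directly to produce two separating vertices. Your approach instead characterises the set $N$ of \emph{non}-separators as the ``midpoint branch'' of the tree, disposes of the trivial case $|N|\le 1$ via the bound $|L|\ge 4$, and in the remaining case locates a single main core whose entire local set lies outside $N$. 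This is more geometric and avoids the four-way split; the paper's argument, in exchange, never needs to reason about the midpoint or the structure of $N$ and works purely with distance inequalities.

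Two small points worth tightening in your write-up. First, the deduction ``at least two regular cores $\Rightarrow$ at least two main cores'' is not literally the statement of Lemma~\ref{lemma1}; you are implicitly using the dichotomy that either some regular core is minor (and then Lemma~\ref{lemma1} gives two main cores) or all regular cores are main. Second, when you write ``is not a g-leg of $c$ and hence contains a regular core'', this uses the structural fact (set up in the paper's introduction) that in a tree with regular cores every small core's closest core is regular, so a subtree hanging off a regular core that fails to be a g-leg must contain a regular core; the paper itself invokes the same implication without further comment, so this is acceptable, but it is worth being aware that it is not purely definitional.
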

\begin{proof}
Consider two vertices $x\neq y$, such that $x,y\notin L$. If $x$
and $y$ are in the subtree consisting of a regular core and its
g-legs, then there are two separations between them due to Lemma
\ref{subt}. Next, assume that $x$ is a vertex of a subtree
consisting of a main core $v$ and its g-legs (that is, $x=v$ or
$x$ is on    a g-leg of $v$), while $y$ is not in this subtree. Let
$v'$ be the neighbor of $v$ on the tree path from $v$ to $y$. As
the subtree of $v'$ is not a g-leg of $v$, it has at least one
regular core and therefore it has a main core, by Lemma \ref{lemma1}. Let $z$ denote such
a core, let $z_1,z_2 \in L$ be vertices of the g-legs of $z$, and
let $v_1,v_2 \in L$ be vertices of the g-legs of $v$ (all of which
must exist by Lemma \ref{twotwo}).  Consider the case $d(y,v)\leq
d(x,v)$. As $d(y,z_i)\leq d(y,v')+d(v',z_i)=d(y,v)-1+d(v',z_i)$,
while $d(x,z_i)=d(x,v)+1+d(v',z_i)\geq d(y,v)+1+d(v',z_i)$, for $i \in \{$1$,$2$\}$,
$z_1$ and $z_2$ separate $x$ and $y$. In the case $d(y,v) > d(x,v)$, we find
$d(y,v_i)=d(y,v)+d(v,v_i)$, for $i \in \{$1$,$2$\}$, while $d(x,v_i) \leq d(x,v)+d(v,v_i) <
d(y,v)+d(v,v_i)$, so $v_1$ and $v_2$ separate $x$ and $y$. If $x$
is a minor core or on a g-leg of a minor core $a$, root the tree
at $a$ (we let $a=x$ if $x$ is a minor core). There are at least
two subtrees of $a$ with main cores (by Lemma \ref{lemma1}) and therefore, with at least
two vertices of $L$ (in each). Since we assume that $y$ is not $a$
or on a g-leg of $a$, it is in one of these subtrees. Let $a'$
be the neighbor of $a$ that is the root of this subtree. If
$d(y,a)\leq d(x,a)$, then for every vertex $u$ in the subtree of
$a'$, $d(x,u)=d(x,a)+1+d(a',u) > d(y,a)+d(a',u)$ and $d(y,u)\leq
d(y,a')+d(a',u)=d(y,a)-1+d(a',u)$, thus every such vertex $u$
separates $x$ and $y$, and there are two separations for this pair
of vertices. Otherwise, $d(y,a) > d(x,a)$, and for every vertex
$w$ in the subtree of a different neighbor of $a$ that has a main
core (and thus at least two vertices of $L$) in its subtree,
$d(x,w)=d(x,a)+d(a,w)$ and $d(y,w)=d(y,a)+d(a,w)$, giving two
separations for $x$ and $y$.

Finally, assume that none of $x$ and $y$ is a regular core or on a g-leg of a regular core (as all cases where one of the two vertices satisfies this condition were considered). Root the tree at $x$. There are at least two subtrees (since a leaf must be a part of a g-leg), and each one must have a regular core (otherwise $x$ is a part of a g-leg). For any vertex $b$ in the subtree that does not contain $y$, $d(y,b)=d(y,x)+d(x,b)>d(x,b)$, giving two separations between $x$ and $y$ again.
\end{proof}

\begin{lemma}\label{oneforeach}
Consider a tree with exactly one core $v$ (that has at least three g-legs), and a set $L \subseteq V$ that contains a local set for $v$. In all the following cases $L$ is a landmark set of the tree.

\begin{enumerate}
\item The set $L$ contains at least one vertex of each g-leg of $v$.
\item The core $v$ has at least four g-legs.
\item The core $v$ has at least two g-legs that are modified legs.
\item The set $L$ contains $v$.
\end{enumerate}

\end{lemma}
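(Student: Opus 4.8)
The plan is to take an arbitrary pair of distinct vertices $x,y\notin L$ and exhibit two vertices of $L$ that separate them, splitting the analysis according to whether $d(x,v)=d(y,v)$. Since the whole tree consists of $v$ together with its g-legs, both $x$ and $y$ lie in the subtree spanned by $v$ and its g-legs; and since $L$ contains a local set of $v$, the restriction of $L$ to the g-legs of $v$ is itself a local set of $v$ (adjoining further g-leg vertices to a local set keeps it a local set). Hence Claim~\ref{clm1} applies to $v$, and the argument in the proof of Lemma~\ref{lemma3}, which for a single core uses only the local-set property at that core, applies to $v$ too.

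First I would dispose of the case $d(x,v)=d(y,v)$: then $x\neq v$, $y\neq v$, and $x$ and $y$ occupy equal positions on the g-legs of $v$, so the argument of Lemma~\ref{lemma3} already yields two separations, regardless of which of the four hypotheses holds. So assume $d(x,v)<d(y,v)$, and let $\ell_y$ denote the g-leg of $v$ containing $y$ (here $y\neq v$ since $d(y,v)>0$). By Claim~\ref{clm1}, every vertex of $L$ that does not lie on $\ell_y$ separates $x$ and $y$, so it suffices to produce two vertices of $L$ off $\ell_y$; this is exactly where the four hypotheses come in. In case~1, $L$ meets every g-leg, and since $v$ has at least three g-legs there are at least two g-legs different from $\ell_y$, each supplying a vertex of $L$. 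In case~2, condition~1 of the definition of a local set permits at most one g-leg (necessarily a standard leg) to receive an $(s,0)$ solution, so at least three of the (at least four) g-legs carry a vertex of $L$ and at most one of those is $\ell_y$. In case~4, $v\in L$ lies on no g-leg and separates $x$ and $y$ because $d(x,v)\neq d(y,v)$, so $v$ is one separating vertex; for the second, condition~1 again leaves at least two nonempty g-legs among the (at least three) g-legs of $v$, at least one of them different from $\ell_y$.

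Case~3 I would handle by reducing it to the previous counting. If no standard leg has an $(s,0)$ solution, then every g-leg carries a vertex of $L$ and we are back in case~1. Otherwise some standard leg has an $(s,0)$ solution, and then condition~3 of the definition of a local set forces every modified leg to have a solution of type $(m,2)$ or $(m,3)$, hence at least two vertices of $L$; since $v$ has at least two modified legs, at least one of them differs from $\ell_y$ and by itself provides two vertices of $L$ off $\ell_y$. I do not anticipate a genuine obstacle: after the split on $d(x,v)$ versus $d(y,v)$, the equal case is Lemma~\ref{lemma3} essentially verbatim and the strict case is Claim~\ref{clm1} together with a count of g-legs. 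The only points that need a little care are the reduction of case~3 to cases~1--2 and the observation that in case~4 the core $v$ itself accounts for one of the two required separations.
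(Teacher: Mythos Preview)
Your proof is correct and essentially identical to the paper's: both split on whether $d(x,v)=d(y,v)$, invoke Lemma~\ref{lemma3} for the equal-distance case, and in the strict case use Claim~\ref{clm1} together with a count of g-legs carrying vertices of $L$ (with the same reduction of case~3 to case~1 when no $(s,0)$ leg exists, and the same use of $v$ itself as one of the two separators in case~4). The only cosmetic difference is that the paper, in case~4, first restricts attention to trees not covered by cases~1--3, whereas you argue case~4 directly; both are fine.
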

\begin{proof}
In each one of the cases we consider two vertices $x,y \notin L$.
If $d(x,v)=d(y,v)$, then $x$ and $y$ have the same position in their g-legs and there are two separations between $x$ and $y$ since $L$ is a local set, by Lemma \ref{lemma3}. Otherwise, assume without loss of generality that $d(x,v)<d(y,v)$ (and thus $y\neq v$).

In the first two cases, $v$ has at least two additional g-legs except for the g-leg of $y$ each having at least one vertex of $L$. In the first case this holds as there are two additional g-legs, and every g-leg has a vertex of $L$. In the second case, $L$ contains a local set and every g-leg, except for at most one, has a vertex of $L$. There are at least four g-legs in total, there are at least two g-legs except for the g-leg of $y$ and a g-leg with no vertices of $L$ (if such a standard leg exists). Thus, there are at least two separations between $x$ and $y$ as by Claim \ref{clm1}, any vertex of $L$ separates $x$ and $y$ possibly except for vertices on the g-leg of $y$.

In the third case, if every g-leg has a vertex of $L$, then the property follows from the first case. Otherwise, there is a standard leg without any vertex of $L$, and therefore every g-leg, except for this leg, has at least two vertices of $L$, by property $3$ of local sets. Since at least two g-legs of $v$ are modified legs, there is a modified leg that is not the g-leg of $y$, and its vertices that belong to $L$ separate $x$ and $y$.

In the fourth case, it is sufficient to consider a tree and a set $L$ that do not satisfy any of the conditions of the first three cases. Thus, $v$ has three g-legs. As $d(x,v)<d(y,v)$, $v$ separates $x$ and $y$. At least one of the g-legs that are not the g-leg of $y$ has at least one vertex in $L$, and this vertex separates $x$ and $y$ as well.
\end{proof}
\begin{corollary}\label{localsetsaregreat}
Consider a tree that has at least one regular core, and a set $L
\subseteq V$. If for every regular core $v$, the subset of $L$
that is restricted to the vertices of the g-legs of $v$ is a local
set, then $L$ is a landmark set.
\end{corollary}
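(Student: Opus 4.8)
The plan is to split on the number of regular cores. If the tree has at least two regular cores, the hypothesis here is exactly that of Lemma~\ref{toocor}, so $L$ is already a landmark set. So suppose the tree has exactly one regular core $v$. By Lemma~\ref{lemma1} the tree has no minor core (a minor core would force two main cores), hence $v$ is a main core. Moreover every subtree of a neighbor of $v$ is a g-leg: a subtree with no core is a path, hence a standard leg; and a subtree with a core has no regular core (there is only one), so all of its cores are small, but the core closest to $v$ among the cores of such a subtree cannot be small --- that would force a tree with two small cores and no regular core, by the property recalled in the setup --- so it is $v$, which leaves that subtree with exactly one core, a small one, making it a modified leg. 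Thus the tree is a single core $v$ with at least three g-legs attached. Finally, since $L$ restricted to the g-legs of $v$ is a local set, the solution it induces on the two standard legs of a small core lying inside a modified leg of $v$ is again a local set (a short check over the solution types $(m,1)$, $(m,2)$, and $(m,3)$), so $L$ induces a local set at \emph{every} core of the tree; this is what lets Lemma~\ref{lemma3} and Claim~\ref{clm1} be used at every core below.

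Now all vertices of $L$ lie on $v$ or its g-legs, and Lemma~\ref{oneforeach} (its proof needs nothing beyond $L$ inducing a local set at every core of this tree and containing a local set of $v$, which we have arranged) disposes of four sub-cases: $L$ meets every g-leg, $v$ has at least four g-legs, $v$ has at least two modified legs, or $v\in L$. It remains to treat $v\notin L$ with $v$ having exactly three g-legs, at most one of them modified, and exactly one g-leg $\ell_0$ disjoint from $L$. Here $\ell_0$ is not a modified leg (those always meet a local set) and not a short leg (a main core with exactly three g-legs has no short g-leg), so $\ell_0$ is a long leg with a solution of type $(s,0)$; then by property~$4$ of local sets each remaining long leg has a solution of type $(s,2)$, and by property~$3$ a remaining modified leg (if any) has a solution of type $(m,2)$ or $(m,3)$, so each of the two non-empty g-legs carries at least two vertices of $L$. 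Given $x\neq y$ with $x,y\notin L$: if $d(x,v)=d(y,v)$ then Lemma~\ref{lemma3} yields two separations; otherwise, say $d(x,v)<d(y,v)$ with $y$ on a g-leg $\ell'$, and $L\setminus\ell'$ still has at least two vertices --- all of $L$ when $\ell'=\ell_0$, and at least the two vertices of the other non-empty g-leg otherwise --- each of which separates $x$ and $y$ by Claim~\ref{clm1}. Hence $L$ is a landmark set.

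The step I expect to be the real obstacle is this last sub-case, which Lemma~\ref{oneforeach} genuinely fails to reach: for instance, a core with three long legs, one disjoint from $L$ and the other two each carrying two selected vertices, is a valid local set falling outside all four of that lemma's cases, so a direct argument is unavoidable. Its crux is the observation that a main core with exactly three g-legs cannot have a short g-leg, which via property~$4$ forces both non-empty g-legs to carry at least two vertices of $L$. A secondary, bookkeeping-type subtlety is the structural reduction of the first paragraph --- both the claim that a tree with one regular core reduces to a single core with at least three g-legs (which needs the closest-core property of small cores), and the fact that local-set-ness at the regular core propagates to any small cores, without which Lemma~\ref{lemma3} and Claim~\ref{clm1} would not be legitimately applicable where the proof invokes them.
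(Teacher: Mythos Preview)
Your proof is correct and follows essentially the same route as the paper: split on the number of regular cores, invoke Lemma~\ref{toocor} for two or more, and for a single regular core $v$ reduce via Lemma~\ref{oneforeach} to the residual case of three g-legs with at most one modified leg and one empty standard leg, then argue directly that the two remaining g-legs each carry at least two landmarks. Two minor remarks: your parenthetical ``a main core with exactly three g-legs has no short g-leg'' is not true of main cores in general, but it is true here because in a tree with a unique regular core the degree of $v$ equals its number of g-legs, so $v$ has degree~$3$ and regularity forces the absence of a short leg (this is exactly the case split the paper spells out). Second, the detour showing that local-set-ness propagates to the small cores inside modified legs is unnecessary: the proofs of Lemma~\ref{lemma3} and Claim~\ref{clm1} only use the local-set hypothesis at the single core $v$ appearing in their conclusions, and you only ever invoke them at the regular core~$v$.
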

\begin{proof}
If the tree has at least two regular cores, this claim was proved in Lemma \ref{toocor}. Assume that the tree has one regular core $u$. If $u$ has at least four g-legs, or it has three g-legs, out of which at least two g-legs are modified legs, then the claim was proved in Lemma \ref{oneforeach} (the second and third parts). Next, assume that $u$ has three g-legs, out of which at most one is modified. If $u$ has three g-legs, such that all of them are standard, and at least one leg is short, then $u$ is a small core (and not regular). If $u$ has one modified leg, and two standard legs, out of which at least one leg is short, then $u$ is a small core (in this case $u$ is not regular either, the tree has two small cores and no regular cores). Thus, we find that $u$ has two long legs, and the third g-leg is either modified, or it is standard and long.
If every leg of $u$ contains a vertex of $L$, then the claim was proved in Lemma \ref{oneforeach} (the first part). Otherwise, $u$ has a standard leg with no vertices of $L$. By the properties of local sets, there are two g-legs with at least two vertices of $L$ on each g-leg. In this case, for any pair $x,y\notin L$, if $d(x,u)=d(y,u)$, then there are two separations between $x$ and $y$ as $L$ contains a local set for the g-legs of $u$, and if $d(x,u)<d(y,u)$, then the two vertices of $L$ that are on a g-leg that does not contain $y$ separate $x$ and $y$.
\end{proof}

\begin{lemma}\label{threev}
Consider a tree with no regular cores and a single small core $v$, such that $v$ has three standard legs. If $L \subseteq V$ contains a local set for $v$ and $|L|\geq 3$, then $L$ is a landmark set. If $L \subseteq V$ contains a local set for $v$ and $|L| \leq 2$, then $L$ is a landmark set if and only if $L$ consists of the two vertices of two short legs of $v$.
\end{lemma}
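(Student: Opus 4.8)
The plan is to prove the two assertions separately. Throughout, note that the tree consists only of $v$ and its three standard legs $\ell_1,\ell_2,\ell_3$ (at least one short), so $L\subseteq\{v\}\cup\ell_1\cup\ell_2\cup\ell_3$; fix a local set $S\subseteq L$ for $v$, and recall (from condition $1$, since at most one leg has an $(s,0)$ solution) that $S$ contains at least two vertices.

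For the case $|L|\ge 3$, take $x\ne y$ with $x,y\notin L$ and assume w.l.o.g.\ $d(x,v)\le d(y,v)$. If $d(x,v)<d(y,v)$, then Claim \ref{clm1} shows that every vertex of $L$ off the g-leg of $y$ separates $x$ and $y$, and a short distance computation shows that among the vertices of $L$ on the g-leg of $y$ at most one (the ``midpoint'' of $x$ and $y$) fails to separate them, so the pair gets at least $|L|-1\ge 2$ separations. If $d(x,v)=d(y,v)$, then $x=\ell_i^p$ and $y=\ell_j^p$ lie on two distinct legs in a common position $p\ge1$, and every vertex of $L$ on $\ell_i\cup\ell_j$ separates them (for $z=\ell_i^s$ one has $d(x,z)=|p-s|<p+s=d(y,z)$, and symmetrically on $\ell_j$), so it suffices to show $|L\cap(\ell_i\cup\ell_j)|\ge 2$. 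This is clear unless one of the two legs, say $\ell_i$, has an $(s,0)$ solution in $S$; but then condition $1$ makes $\ell_j$ non-$(s,0)$, while conditions $4$ and $5$ — using that $\ell_i$ is empty and that $x,y\notin L$ — rule out $\ell_j$ being of type $(s,1)$ or $(s,3)$, so $\ell_j$ is of type $(s,2)$ and already carries two vertices. In every case the pair has two separations, so $L$ is a landmark set.

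For the case $|L|\le 2$, since any local set of $v$ has at least two vertices we get $v\notin L$ (else $|L|\ge3$), so $L=S$ is a local set of exactly two vertices: one leg is empty and each of the other two carries a single vertex, hence has type $(s,1)$ or $(s,3)$. If the empty leg is long, condition $4$ forces every other long leg to be of type $(s,2)$, which is impossible here, so the two non-empty legs are short and $L$ consists of their position-$1$ vertices. If the empty leg is short, condition $5$ forbids type $(s,1)$ on any long leg, so again each non-empty leg carries only its position-$1$ vertex. Thus $L=\{\ell_a^1,\ell_b^1\}$ for two distinct legs $\ell_a,\ell_b$. If both are short, then $V\setminus L=\{v\}\cup\ell_c$ for the remaining leg $\ell_c$, and since $d(v,\ell_a^1)=1$ while $d(\ell_c^t,\ell_a^1)=t+1\ge2$ for all $t\ge1$, the two vertices $\ell_a^1,\ell_b^1$ separate every pair of distinct vertices of $\{v\}\cup\ell_c$; hence $L$ is a landmark set. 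Conversely, if one of $\ell_a,\ell_b$, say $\ell_a$, is long, then $\ell_a^2\notin L$ and $d(v,\ell_a^1)=d(\ell_a^2,\ell_a^1)=1$, so the pair $\{v,\ell_a^2\}\subseteq V\setminus L$ is separated only by $\ell_b^1$ and $L$ is not a landmark set. Therefore, when $|L|\le2$, $L$ is a landmark set exactly when it is the pair of vertices of two short legs of $v$.

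I expect the main obstacle to be the enumeration in the $|L|\le2$ case: one has to check carefully that conditions $1$, $4$ and $5$ leave only the two families ``two vertices of two short legs'' and ``$\{\ell_a^1,\ell_b^1\}$ with a short empty leg and at least one long non-empty leg,'' and that the second family really does fail, via the single pair $\{v,\ell_a^2\}$. The remaining ingredients — the ``midpoint'' computation and $|p-s|<p+s$ in the $|L|\ge3$ case, and the appeal to conditions $1$, $4$, $5$ there to dispose of two equally-positioned leaves of short legs — are routine.
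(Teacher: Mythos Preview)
Your proof is correct and follows essentially the same line as the paper's. The only organizational difference is that for $|L|\ge 3$ the paper first invokes Lemma~\ref{oneforeach} (to dispose of the cases $v\in L$ or ``every leg meets $L$'') and Lemma~\ref{lemma3} (for the equal-distance case), whereas you handle everything directly---your uniform ``at most one midpoint, hence $|L|-1\ge 2$ separations'' bound is a slightly cleaner packaging of the same computation the paper does in its remaining subcase, and your equal-distance argument re-derives inline the relevant piece of Lemma~\ref{lemma3}.
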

\begin{proof}
Assume that $|L|\geq 3$. By Lemma \ref{oneforeach}, if $v \in L$, or if every leg has a vertex of $L$, we are done.
Otherwise, consider a pair $x,y \notin L$.
If $d(x,u)=d(y,u)$, then by Lemma \ref{lemma3} there are two separations between $x$ and $y$ as $L$ contains a local set for the g-legs of $u$.
If $d(x,u)<d(y,u)$, and the two legs that are not the leg of $y$ have at least two vertices of $L$, we are done too (by Claim \ref{clm1}).
As $|L|\geq 3$, the remaining case is that the leg of $y$ has at least two vertices in $L$, while one of the other legs of $u$ has one vertex of $L$, and it separates $x$ and $y$.
At most one vertex of the leg of $y$ can have equal distances to $x$ and $y$, and therefore, any other vertex of the leg of $y$ which is in $L$ separates them as well, and we assumed there is at least one such vertex for this case.

A local set cannot contain less than two vertices by its properties, and in this case there are two legs with one vertex of $L$ each,  one standard leg without any vertices of $L$, and $v \notin L$. Every vertex of $L$ separates every pair $x,y \notin L$, as $|L|=2$.
For every leg $\ell$, if $\ell^i \in L$ for $i \geq 1$ then $\ell$ has $i$ vertices, as otherwise $\ell^i$ does not separate $\ell^{i-1}$ and $\ell^{i+1}$ if $i\geq 2$, and it does not separate $v$ and $\ell^2$ if $i=1$. As there is a leg with a type $(s,0)$ solution, the solution of $\ell$ is of type $(s,2)$ or $(s,3)$, and since it has one vertex of $L$, the solution must be of type $(s,3)$. Thus, $i=1$, and $\ell$ is short. On the other hand, if $L$ consists of two vertices of short legs, the remaining vertices are $v$ and the vertices of one leg $\tilde{\ell}$, each having a different distance to the two vertices of $L$ (this distance is $1$ for $v$, and $i+1$ for $\tilde{\ell}^i$).
\end{proof}

Consider a tree with two small cores $v$ and $u$, and no regular cores. The tree consists of a path between $u$ and $v$, and each of them has two standard legs, a short leg and another standard leg which can be short or long. For each small core, the path to the other small core and its legs can be seen as its modified leg.

\begin{lemma}
Consider a tree with no regular cores and two small cores $v$ and $u$. A set $L \subseteq V$ is a landmark set if and only if it contains a local set for the three g-legs of $v$ and it is a local set for the three g-legs of $u$.

A minimal landmark set, with respect to set inclusion, will contain at most one vertex on the path between $u$ and $v$ (excluding $u$ and $v$), and at most two vertices of each long leg of $u$ and $v$.
\end{lemma}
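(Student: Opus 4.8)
The plan is to reduce both assertions to the earlier structural lemmas about local sets. I would first fix notation: write the central path as $v=q_0,q_1,\dots,q_k,q_{k+1}=u$, let $A$ be the short leg of $v$ and $B$ its other (standard) leg, $C$ the short leg of $u$ and $D$ its other leg; then the g-legs of $v$ are $A$, $B$ and the modified leg $M^v$ whose vertices are $q_1,\dots,q_k,u$ together with the vertices of $C,D$, with small core $u$ at position $k+1$ on $M^v$ and branch vertices $c_1,d_1$ (the first vertices of $C,D$) at position $k+2$; symmetrically $M^u$ consists of $q_1,\dots,q_k,v$ together with the vertices of $A,B$, with $v$ at position $k+1$ and branch vertices $a_1,b_1$. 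I would also record the routine monotonicity fact that a set of vertices of the g-legs of a core which contains a local set is itself a local set, so that ``$L$ contains a local set for the g-legs of $c$'' means the same as ``$L\cap(\text{g-legs of }c)$ is a local set''; this aligns the statement, Lemma~\ref{mustbelocal}, and Lemma~\ref{lemma3}. Finally, note that $u$ and $v$ are the only cores of $T$, hence the only vertices of degree $\ge3$.

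For the forward implication of the equivalence I would just invoke Lemma~\ref{mustbelocal} at $v$ and at $u$. For the converse, suppose $L$ induces a local set on the g-legs of $v$ and on those of $u$. First I would observe $|L|\ge3$: $M^v$ contributes at least one vertex (every $(m,\cdot)$ solution is non-empty), at least one of $A,B$ contributes a vertex (condition~1 of local sets), and if one of $A,B$ has an $(s,0)$ solution then condition~3 forces $M^v$ to an $(m,2)$ or $(m,3)$ solution, hence to two vertices. Now take $x\ne y$ with $x,y\notin L$ and consider the path $\pi$ between them in $T$. If $\pi$ has odd length then every vertex other than $x,y$ separates $x$ and $y$ (the nearest vertex of $\pi$ to it can never be a midpoint), so all $\ge3$ vertices of $L$ do. If $\pi$ has even length with midpoint vertex $w$, then the vertices of $T$ that fail to separate $x,y$ are exactly $w$ together with the components of $T-w$ meeting no edge of $\pi$; if $\deg_T(w)=2$ this set is just $\{w\}$, so at least $|L|-1\ge2$ vertices of $L$ separate the pair, while if $\deg_T(w)\ge3$ then $w\in\{u,v\}$ and $d(x,w)=d(y,w)$, so Lemma~\ref{lemma3} applied at the core $w$ (using that $L$ induces local sets on the g-legs of both $u$ and $v$) already yields two separations. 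This proves the equivalence.

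For the minimality claim I would let $L$ be a landmark set minimal for inclusion and delete a vertex, using the equivalence to stay a landmark set and so contradict minimality. If $L$ contained two path vertices $q_{m_1},q_{m_2}$, I would delete $q_{m_1}$: on both $M^v$ and $M^u$ a path vertex is neither a branch vertex nor one of the ``deep'' vertices (positions $\ge k+3$) required by an $(m,2)$ solution, and $q_{m_2}$ still lies on both $M^v$ and $M^u$, so each induced solution on $M^v,M^u$ still has at least two vertices and stays of type $(m,2)$ or $(m,3)$; no other induced solution changes, so the smaller set is still a landmark set. If $B$ is long and $|L\cap B|\ge3$, at least two of the selected vertices of $B$ have position $\ge2$; deleting one such $b_{j_0}$ leaves $B$'s solution at $v$ of type $(s,2)$, and on $M^u$, where $b_{j_0}$ sits at position $\ge k+3$, the induced solution stays of type $(m,2)$ or $(m,3)$---splitting on whether $b_1\in L$: if $b_1\notin L$ then at least two deep vertices of $B$ survive (so $(m,2)$, or $(m,3)$ if $a_1\in L$), and if $b_1\in L$ then the solution is $(m,3)$ and keeping the branch vertex $b_1$ plus one more surviving vertex of $B$ suffices. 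The case of $D$ is symmetric. In each case a strictly smaller set is a landmark set, contradicting minimality.

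I expect the main obstacle to be packaging the converse of the equivalence cleanly: the governing idea---that for any pair $x,y\notin L$ the only vertex of $T$ that can fail to separate them, outside the equidistant-from-a-core case handled by Lemma~\ref{lemma3}, is a single degree-two vertex, so $|L|\ge3$ finishes---is short once spotted but hinges on correctly identifying which lemma to call and on the observation that $u,v$ are the only high-degree vertices. The minimality part is then mostly careful bookkeeping with the local-set conditions under vertex deletion; the delicate point is ensuring a deletion never pushes an $(m,2)$ solution down to $(m,1)$ nor strips the last deep pair from an $(m,2)$ solution, with the subcase $|L\cap B|\ge3$, $b_1\in L$ being the one that genuinely needs the $(m,2)$-versus-$(m,3)$ distinction.
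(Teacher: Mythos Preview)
Your proof is correct, and for the converse direction you take a genuinely different and more economical route than the paper. The paper proceeds by a layered case analysis: it first disposes of the cases where some small core has all three g-legs meeting $L$, or where $u\in L$ or $v\in L$, via Lemma~\ref{oneforeach}; then, in the remaining case, it splits further on whether $L$ meets the $u$--$v$ path, introducing an auxiliary ``core'' $z$ on that path and doing explicit distance computations. You instead reduce everything to a single midpoint argument: after establishing $|L|\ge 3$, you observe that for any $x,y\notin L$ the only vertices of the tree that can fail to separate them are those whose nearest point on the $x$--$y$ path is its midpoint, and since the only vertices of degree $\ge 3$ are $u$ and $v$, either this ``bad'' set is a single degree-two vertex (leaving $\ge |L|-1\ge 2$ separators) or the midpoint is a core and Lemma~\ref{lemma3} applies directly. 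This is shorter and more conceptual; the paper's approach, by contrast, stays closer to the local-set machinery and reuses Lemma~\ref{oneforeach} rather than introducing the midpoint idea.

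For the minimality claim your argument follows the paper's in spirit---delete a vertex and check that the local-set conditions persist at both cores---but is more thorough. When a long leg $B$ of $v$ carries at least three vertices of $L$, the paper only notes that removing one keeps $B$'s solution of type $(s,2)$; you additionally verify that the induced solution on the modified leg $M^u$ at the opposite core stays of type $(m,2)$ or $(m,3)$, with the case split on $b_1\in L$ that you flagged as the delicate point. That extra check is indeed needed for the equivalence to apply, so your bookkeeping here is an improvement over the paper's terse treatment.
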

\begin{proof}
Since the tree can be seen as a core and its three g-legs in two way, $L$ must contain a local set for the g-legs of $u$ and for the g-legs of $v$.

Assume now that $L$ contains local sets for the two sets of g-legs. If for at least one small core, each of its g-legs has a vertex of $L$, we are done by the first part of Lemma \ref{oneforeach}. Moreover, if $u\in L$ or $v \in L$, we are done by the fourth part of Lemma \ref{oneforeach}.
Since a modified leg has at least one vertex of a local set (by property $2$ of local sets) we find that each of $u$ and $v$ has a standard leg without any vertex of $L$, and every modified leg has at least two vertices of $L$.

If $L$ contains a vertex $z$ on the path between $u$ and $v$ (excluding $u$ and $v$), for the sake of the proof we see $z$ as having two g-legs (one containing $u$ as a small core, and the other one containing $v$ as a small core). As each of $u$ and $v$ has a standard leg with at least one vertex in $L$, each g-leg of $z$ has at least one vertex of $L$. Consider two vertices $x,y \notin L$. As $z \in L$, each of $x$ and $y$ is on a g-leg of $z$. If $x$ and $y$ are on the same g-leg, assume (without loss of generality) that this is the g-leg containing $u$. As the solution of one of the standard legs of $u$ is of type $(s,0)$, its other leg has either a type $(s,2)$ solution or a type $(s,3)$ solution. If $d(x,z)=d(y,z)$, then $d(x,u)=d(y,u)$ (as the paths of $x$ and $y$ to $z$ traverse $u$), $x$ and $y$ must be the vertices in position $1$ on the standard legs of $u$, and in this case the solution type of the standard leg of $u$ with at least one vertex of $L$ cannot be $(s,3)$ (as $x,y \notin L$). The two vertices of $L$ of the long standard leg of $u$ are closer to the vertex of position $1$ of that leg than to the vertex of position $1$ of the short leg of $u$, and thus they separate $u$ and $v$. If $d(x,z) \neq d(y,z)$, then one of $x$ and $y$ is closer to $z$ than the other vertex, and it is also closer to any vertex of $L$ on the other g-leg of $z$. This results in at least two separations between $x$ and $y$. If $x$ is on the g-leg of $z$ containing $u$ while $y$ is on the g-leg of $z$ containing $v$, let $u' \in L$ be on the former g-leg of $z$ and let $v' \in L$ be on the latter g-leg of $z$. We consider the two cases again. Assume that $d(x,z) \leq d(y,z)$. We get $d(x,u')\leq d(x,u)+d(u,u')$ while $d(y,u')=d(x,u)+d(u,u')$, and $d(x,u)<d(x,z) \leq d(y,z)$, showing that $u'$ separates $x$ and $y$. If $d(x,z) < d(y,z)$, $z$ also separates $x$ and $y$, and if $d(x,z) = d(y,z)$, then $d(y,v')\leq d(y,v)+d(v,v')$ while $d(x,v')=d(x,v)+d(v,v')$, and $d(y,v)<d(y,z)=d(x,z)<d(x,v)$, showing that $v'$ also separates $x$ and $y$.

We are left with the case that each modified leg has at least two vertices, and these vertices are not on the path between $u$ and $v$ (and they are not $u$ or $v$), and as one standard leg of each small core has no vertices of $L$, each small core has one long leg with two vertices of $L$. In this case consider two vertices $x,y \notin L$. Each core has two g-legs with at least two vertices of $L$ on each. If $d(x,u)=d(y,u)$, then there are two separations between $x$ and $y$ as $L$ contains a local set for $u$ (by Lemma \ref{lemma3}). Otherwise, if $d(x,u)<d(y,u)$, there is a g-leg of $u$ that is not the g-leg of $y$ and has two vertices of $L$, and these vertices separate $x$ and $y$.

Next, consider a minimal landmark set $L$. If $L$ has at least two vertices on the path between $u$ and $v$ excluding the endpoints, then the modified legs of $u$ and $v$ (both containing this path) have at least three vertices of $L$ each (as each of $u$ and $v$ has at least one vertex of $L$ on a standard leg). Removing one vertex of the path between $u$ and $v$ does not change the type of solutions of modified legs (a solution of type $(m,2)$ remains of type $(m,2)$ and a solution of type $(m,3)$ remains of type $(m,3)$). Next, assume that a leg of one small core has at least three vertices. Removing the vertex of maximum distance to the small core of its standard leg does not change the type of solution of this leg (the solution was of type $(s,2)$ and remains of this type).
\end{proof}

We say that a local set is thrifty if solutions of types $(s,2)$, $(m,2)$ and $(m,3)$ consist of exactly two vertices.

\begin{lemma}
\begin{enumerate}

\item Consider a landmark set $L$, such that the subset of vertices of $L$ on the g-legs of a regular core $v$ is not a thrifty local set. Then, $L$ is not minimal with respect to set inclusion.

\item Consider a landmark set $L'$ for a tree with a single core $u$ that is small, such that the subset of vertices of $L'$ on the standard legs of $u$ is not a thrifty local set. Then, $L'$ is not minimal with respect to set inclusion.
\end{enumerate}
\end{lemma}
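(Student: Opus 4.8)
The plan is, in each of the two parts, to single out one ``redundant'' vertex of the given landmark set whose deletion preserves the local-set structure at every relevant core, and then to invoke Corollary~\ref{localsetsaregreat} (for part~1) or Lemma~\ref{threev} (for part~2) to conclude that the strictly smaller set is still a landmark set, contradicting minimality with respect to set inclusion.

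I would begin with the step common to both parts. Since $L$ (resp.\ $L'$) is a landmark set, Lemma~\ref{mustbelocal} (see its proof) gives that the restriction $S$ of $L$ to the g-legs of $v$ (resp.\ the restriction of $L'$ to the standard legs of $u$) is a local set. The hypothesis that $S$ is not thrifty means some g-leg $\ell$ carries a solution of type $(s,2)$, $(m,2)$, or $(m,3)$ that uses at least three vertices of the set. In each case I would choose a vertex $w$ of $\ell$ in the set whose removal leaves the solution type of $\ell$ unchanged: for $(s,2)$, any such vertex works, since two vertices still remain on $\ell$; for $(m,2)$, with the small core of $\ell$ at position $i$, delete a vertex other than two fixed solution vertices of positions at least $i+2$; for $(m,3)$, delete a vertex other than a fixed solution vertex of $\{\ell^a,\ell^b\}$. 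Since at least three vertices are available, such a $w$ exists, and after its deletion $\ell$ keeps its type, so conditions~1--5 (which depend only on the multiset of leg types, and no other leg is touched) still hold: the new restriction to the g-legs of $v$ (resp.\ the legs of $u$) is again a local set.

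The crux is to check that deleting $w$ does not break the local-set requirement at any \emph{other} core, which is an issue only in part~1. For this I would establish the separation fact that a vertex lying on a g-leg of a regular core $v$ lies on no g-leg of any other regular core: a g-leg of $v$ is joined to the rest of the tree only through $v$, and its only core, if any, is a small core, so a subtree of a neighbour of a different regular core $v''$ that met $\ell$ would be forced to contain $v$ and could therefore not be a g-leg. Hence, for every regular core $v''\neq v$, the restriction of $L\setminus\{w\}$ to the g-legs of $v''$ equals that of $L$, which is a local set by Lemma~\ref{mustbelocal}; together with the previous paragraph, Corollary~\ref{localsetsaregreat} (applicable since the tree contains the regular core $v$) shows $L\setminus\{w\}$ is a landmark set, and it is a proper subset of $L$. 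In part~2 there is no other core, and it remains only to verify the cardinality hypothesis of Lemma~\ref{threev}: the non-thrifty leg $\ell$ still contributes at least two vertices after the deletion, and since $u$ has three standard legs, at least one further leg has a non-$(s,0)$ solution, contributing at least one more vertex, so $|L'\setminus\{w\}|\geq 3$; Lemma~\ref{threev} then gives that $L'\setminus\{w\}$ is a landmark set, again a proper subset.

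The one genuinely non-routine ingredient is, I expect, the disjointness observation for g-legs of distinct regular cores used in part~1; the rest is careful bookkeeping with solution types. In particular, one should verify the boundary cases where $w$ is the small core of a modified leg or a path vertex on the segment from $v$ to that small core: these cause no trouble, since then $w$ has position at most $i$ and is never one of the ``protected'' vertices in the choices above, so the type of $\ell$ is preserved.
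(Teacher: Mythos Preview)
Your proposal is correct and follows essentially the same approach as the paper: in each part you find a g-leg carrying at least three vertices of the set, delete one vertex in a type-preserving way, and then invoke Corollary~\ref{localsetsaregreat} (part~1) or Lemma~\ref{threev} (part~2) to conclude the smaller set is still a landmark set. Your explicit justification that g-legs of distinct regular cores are vertex-disjoint is a welcome addition---the paper simply asserts that ``the modification does not change the local sets of other regular cores'' without argument---and your counting for $|L'\setminus\{w\}|\geq 3$ in part~2 matches the paper's observation that $|S'|\geq 4$ before deletion.
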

\begin{proof}
Let $S'=L'\setminus \{u\}$. If $S'$ is not thrifty, we find $|S'|\geq 4$, as there is a leg $\ell'$ with a type $(s,2)$ solution consisting of at least three vertices, and at least one of the other two legs has at least one vertex in $S'$. Removing an arbitrary vertex of $\ell'$ from $S'$ to obtain $S''$ still results in a local set with at least three vertices,  and by Lemma \ref{threev}, $S''$ is a landmark set for the tree, showing that $L'$ is not minimal.

Let $S$ be the subset of $L$ restricted to the g-legs of $v$. By Lemma \ref{mustbelocal}, $S$ is a local set. Since $S$ is not thrifty, there is a g-leg $\ell$ that contains at least three vertices. If $\ell$ has a type $(s,2)$ solution, remove an arbitrary vertex of $\ell$ from $L$ to obtain $\tilde{L}$. If $\ell$ has a type $(m,2)$ or type $(m,3)$ solution, remove a vertex of $\ell$ from $L$ to obtain $\tilde{L}$, where the removed vertex is such that the solution remains of the same type (if the small core has position $i$ on $\ell$, then for a type $(m,2)$ solution, remove a vertex such that at least two vertices of positions $i+2$ or larger on $\ell$ are not removed. For a type $(m,3)$ solution, remove a vertex whose position on $\ell$ is not $i+1$).
In all cases,  the modification does not change the local sets of other regular cores, and the subset of $\tilde{L}$ restricted to the g-legs of $v$ remains a local set, thus by Lemma \ref{localsetsaregreat}, $\tilde{L}$ is a landmark set for the tree, showing that $L$ is not minimal.
\end{proof}

The case of a tree with two cores that are small was not considered in the lemma as in this case there can be a minimal landmark set $L$ that is not thrifty.

\section{The algorithm} The algorithm consists of two parts. In the first part, we detect the cores, and partition them into regular cores and small cores. Every regular core will have a list of its neighbors whose subtrees are g-legs. Finding such a list can be done by running DFS on the tree. In the second part (which we describe in more detail below), if the tree has at least one regular core, the algorithm simply finds a thrifty local set for each regular core. If the tree does not have any regular cores, then a landmark set is computed by considering all possible solution types. The resulting running times are linear.

In the case of a tree with at least one regular core $u$, the local sets are computed independently, and moreover, the dependence between the g-legs is only in the sense that for a given g-leg of $u$, only the types of solutions of the other g-legs are relevant, and not the specific solutions of the other g-legs of $u$. Thus, for each g-leg, we search for a solution of minimum cost for a given solution type. The case of a single small core is similar, as the identity of the vertex in a type $(s,1)$ solution or the vertices in a type $(s,2)$ solution does not affect the validity of the local set as a landmark set (this property only depends on number of vertices). In the case of two small cores, there are several similar properties. If there is a vertex on the path between the two cores (excluding the cores) in a landmark set, its exact identity is not important since replacing it with another vertex of this path keeps the types of solutions of the modified legs as they were. Similarly, for a long leg of one of the small cores, replacing one vertex whose position is at least $2$ with another such vertex does not change the solution type (neither for the long leg nor for the modified leg of the other small core that contains this long leg).

\paragraph{Finding a minimum cost local set for a regular core.}
Consider a regular core $v$. At most three solutions (which are thrifty local sets for the g-legs of $v$) will be considered, and a solution of minimum cost will be given as the output. The solution kinds are based on the definitions of local sets and thrifty local sets.

First, the different kinds of solutions for each g-leg are computed. For every standard leg compute the minimum cost solutions of types $(s,1)$, $(s,2)$, and $(s,3)$ (by finding the minimum cost vertex whose position is not $1$, and the two minimum cost vertices, using linear time in the length of the leg). For short legs there a unique solution is computed, which has type $(s,3)$.
For every modified leg $l'$ compute the minimum cost solutions of types $(m,1)$, $(m,2)$, and $(m,3)$ (if the small core has position $i$ on $l'$, find the minimum cost vertex $b$ out of the two vertices of positions $i+1$ on $l'$ , the two minimum cost vertices of positions $i+2$ or greater on $l'$, and a vertex of minimum cost excluding $b$ (to be combined with $b$ in an output set), using linear time in the number of vertices of the leg).

The first solution is computed independently for each g-leg. For every standard leg, select a solution of minimum cost, out of the solutions computed for it. In the second solution, there will be a short leg whose solution is of type $(s,0)$ (if there is no such leg of $v$, then no such solution is computed). For each modified leg, select a minimum cost solution out of its already calculated solutions of types $(m,2)$ and $(m,3)$. For each long leg, select a minimum cost solution out of its already calculated solutions of types $(s,2)$ and $(s,3)$. Select a short leg whose $(s,3)$ type solution has maximum cost and change its solution into type $(s,0)$. This completes the description of the second solution.
In the third solution, there will be a long leg of $v$ whose solution is of type $(s,0)$ (if there is no such leg, then no such solution is computed). For each modified leg, select a minimum cost solution out of its already calculated solutions of types $(m,2)$ and $(m,3)$. For each short leg, the solution is of type $(s,3)$. Find a long leg whose $(s,2)$ solution has the maximum cost. Define the solution of this leg to be of type $(s,0)$, and any other long leg will have a type $(s,2)$ solution.

\paragraph{Finding a minimum cost landmark set for a tree with no regular cores and one small core.} Let $v$ denote the small core of the tree.
In this case we will consider solutions of several kinds, and as not every local set with two vertices is a valid landmark set, we will consider local sets of two vertices separately. As in the case of a regular core, the first step is to computed minimum cost solutions of the three types ($(s,1)$, ($s,2)$, and $(s,3)$) for each leg.

The first solution is computed as for regular cores, that is, a minimum cost solution out of the three types is selected for each leg. Next, for each of the three legs, local sets where this leg has a type $(s,0)$ solution are considered. Consider a leg $\ell$ whose solution will be of type $(s,0)$. If $\ell$ is long, a type $(s,2)$ solution is selected for any long leg except for $\ell$, and a type $(s,3)$ solution is selected for any short leg. This is a landmark set as it either contains at least three vertices, or $L$ consists of two vertices the two of short legs, by Lemma \ref{threev}. If $\ell$ is short, then there are (at most) four additional solutions to be considered for the case where $\ell$ has a type $(s,0)$ solution, where any short leg except for $\ell$ has a type $(s,3)$ solution, and any long leg has either a type $(s,2)$ solution or a type $(s,3)$ solution. The only kind of solution with two vertices of the legs is the one where two short legs have type $(s,3)$ solution, resulting in two selected vertices. In this case, $v$ is added to the solution if there is a long leg with a type $(s,3)$ solution. We found at most four solutions for each leg, giving a constant number of solutions, where the output is a minimum cost solution out of these solutions.

\paragraph{Finding a minimum cost landmark set for a tree with no regular cores and two small cores.}
Recall that in this case the output may contain a local set that is not thrifty. Let the two cores be denoted by $u$ and $v$. Let $a_{uv}$ denote a vertex of minimum cost on the path between $u$ and $v$ excluding the endpoints. Let $u_1$, $u_2$, $v_1$, $v_2$ denote the four neighbors of $u$ and $v$ on theirs standard legs (the vertices of position $1$). Let $b_1$ and $b_2$ be two vertices of minimum costs on the long leg of $v$ that are not neighbors of $v$, and let $b'_1$ and $b'_2$ be two vertices of minimum cost on the long leg of $u$ that are not neighbors of $u$ (it is possible that some of the vertices $b_1$, $b_2$, $b'_1$, and $b'_2$ do not exist if at least one of $u$ and $v$ does not have a long leg, or it has a long leg with two vertices). Consider all subsets of $\{a_{uv},u,v,u_1,u_2,v_1,v_2,b_1,b_2,b'_1,b'_2\}$ .
For each subset, test whether it is a local set for $u$ and a local set for $v$ (or alternatively, test whether it is a landmark set, which can be done in linear time by computing the distances from each of the eleven vertices to all vertices), and let the sets satisfying this property be called valid solutions. As the set $\{u_1,u_2,v_1,v_2\}$ contains a local set for $u$ and a local set for $v$, at least one valid solution is found. Return a subset of minimum cost out of the valid solutions.

\end{document}